\newtheorem{assumption}{Assumption}
\newcommand{\bfu}{\mathbf{u}}
\newcommand{\bfp}{\mathbf{p}}
\newcommand{\bfq}{\mathbf{q}}
\newcommand{\bfI}{\mathbf{I}}
\newcommand{\bfll}{\mathbf{L}}
\newcommand{\bfv}{\mathbf{v}}
\newcommand{\bfs}{\mathbf{s}}
\newcommand{\relDsim}{~\mathit{\mathcal{D}}\text{-}\mathrm{dominates}~}
\newcommand{\relD}{~\mathrm{strictly}~\mathcal{D}\text{-}\mathrm{dominates}~}
\begin{document}
% Title portion. Note the short title for running heads 
%\title[Implementing the lexicographic maxmin bargaining solution]{Implementing the lexicographic maxmin bargaining solution}  
%\author{Submission 213}

\mainmatter  % start of an individual contribution
\singlespacing
% first the title is needed
\title{Implementing the Lexicographic Maxmin Bargaining Solution}
%\title{Relating Distortion and Fairness of Social Choice Rules under Metric Preferences}

% a short form should be given in case it is too long for the running head
%\titlerunning{Relating Distortion and Fairness}

\author{Ashish Goel and Anilesh K. Krishnaswamy\\
\mailsc}
\authorrunning{Goel and Krishnaswamy}
%% (feature abused for this document to repeat the title also on left hand pages)

% the affiliations are given next; don't give your e-mail address
% unless you accept that it will be published
\institute{
Stanford University \\
}
\maketitle

% note that the abstract must come before \maketitle
\begin{abstract}
There has been much work on exhibiting mechanisms that implement various bargaining solutions, in particular the Kalai-Smorodinsky solution \cite{moulin1984implementing} and the Nash Bargaining solution. Another well-known and axiomatically well-studied solution is the lexicographic maxmin solution. However, there is no mechanism known for its implementation. To fill this gap, we construct a mechanism that implements the lexicographic maxmin solution as the unique subgame perfect equilibrium outcome in the n-player setting. As is standard in the literature on implementation of bargaining solutions, we use the assumption that any player can grab the entire surplus. Our mechanism consists of a binary game tree, with each node corresponding to a subgame where the players are allowed to choose between two outcomes. We characterize novel combinatorial properties of the lexicographic maxmin solution which are crucial to the design of our mechanism.
\end{abstract}

% note: this command has been disabled to remove the ACM copyright block. Sorry...
%\thanks{This work is supported by the National Science Foundation,
%  under grant CNS-0435060, grant CCR-0325197 and grant EN-CS-0329609.}

%%%%%%%%%%%%%
\section{Introduction}
\sloppy Bargaining situations are encountered everywhere. Wage negotiations between employers and trade unions, trade pacts between nations, and environmental negotiations between developed and developing countries are some examples of bargaining that we have seen over the years. Essentially, the problem is one of choosing a feasible alternative, through cooperation, by a group of entities with potentially conflicting preferences. As Kalai \cite{kalai1985solutions} points out, bargaining theory 
\begin{quote}
``may be viewed as a theory of consensus, because when it is applied it is often assumed that a final choice can be made if and only if every member of the group supports this choice."
\end{quote}

As bargaining theory inherently deals with the aggregation of peoples' preferences over a set of outcomes, it closely resembles the theory of social choice. One aspect that distinguishes the bargaining problem from social choice is the existence of a disagreement outcome which may be invoked (often by an outside mediator) when the parties involved in bargaining fail to reach consensus. We provide a formal definition of the bargaining problem in Section \ref{sec:model}.

The first question that bargaining attempts to answer is what constitutes a good way of choosing the consensus outcome. Assuming the preferences of the agents are known to the designer, what choice should be made? The earliest answer to this question was an axiomatic one: the well known Nash Bargaining solution \cite{nash1950bargaining}. Although originally defined for the 2-player case, the axiomatic characterization of the Nash solution extends seamlessly to the $n$-player case. 

Another prominent bargaining solution is the Kalai-Smorodinsky solution \cite{kalai1975other}, which is obtained by replacing Nash's independence condition by an axiom of monotonicity. The axiomatic characterization of this solution holds up only when there are two players, and fails to generalize to the case with $n$ players. A straightforward extension of the 2-player Kalai-Smorodinsky solution to the $n$-player case, one which follows the form of the 2-player solution, is not guaranteed to be Pareto optimal. Henceforth, we refer to this extension as the (n-player)  Kalai-Smorodinsky solution. 

In this paper, we are primarily interested in the lexicographic maxmin solution (see Section \ref{subsec:lexmaxmin} for a formal definition), as proposed by Sen \cite{sen2017collective} or Rawls \cite{rawls2009theory}. It is obtained by a repeated application of the maxmin criterion: first, selecting feasible outcomes that maximize the utility of the worst-off player, then, among these outcomes, selecting those that maximize the utility of the next worst-off player, and so on. The utility gains are measured with respect to the disagreement point. The lexicographic maxmin solution can also be characterized by a principled extension of the Kalai-Smorodinsky axioms to the n-player case \cite{imai1983individual}. Other characterizations of the lexicographic maxmin solution are also known \cite{chang1999characterization,chun1989lexicographic}. 

The lexicographic maxmin solution has also had a long history outside of the literature on bargaining. It corresponds directly to the notion of maxmin fairness \cite{bertsekas1992data,jaffe1981bottleneck} which has been extensively studied in network routing and bandwidth allocation problems \cite{kleinberg1999fairness,kumar2000fairness,goel2000combining}. It has also been studied in relation to the resource allocation \cite{ghodsi2011dominant,bogomolnaia2004random,parkes2015beyond} and cake-cutting \cite{chen2013truth} problems . Allowing for randomization, these settings are special cases of the bargaining problem we study in this paper, as discussed in Section \ref{sec:model}.

Besides defining good solutions, the other question which bargaining theory deals with is that of implementation -- designing a mechanism which, with no knowledge of the agents' preferences, still leads them to the desired solution. However, the players are assumed to have complete information of the preferences of other players. In other words, we need to construct a game for non-cooperative agents which results in the right solution in equilibrium. While many solution concepts could arguably be used in this scenario, the standard approach is to look at subgame perfect equilibria of repeated games. Bargaining inherently involves repeated interactions between players, of offers and counter offers, which makes the subgame perfect equilibrium a logical and natural choice for the solution concept to be employed \cite{herrero1992implementation}. Henceforth, we use the term \emph{implementation} to refer to the problem of designing a mechanism for which the desired bargaining solution is the unique subgame perfect outcome, with each player having complete information.

The implementation of most of the aforementioned bargaining solutions has been studied. The 2-player Nash solution can be approximately implemented \footnote{with time discounted utilities, in the limit as the discount factor goes to 0.} as the unique subgame perfect equilibrium in a game of alternating offers \cite{binmore1986nash}.  Of particular interest to us is the approach of Moulin \cite{moulin1984implementing} in implementing the Kalai-Smorodinsky solution. The same approach has been generalized to other solutions such as Nash (and its n-player extension), and relative utilitarian solutions \cite{miyagawa2002subgame,samejima2005note}. In fact, this approach works for any bargaining solution that is the maximizer of some convex function of the players' utilities. For instance, the Nash solution is equivalent to maximizing the product of utilities, the Kalai-Smorodinsky solution is equivalent to maximizing the minimum among the players' utilities, etc. In Section \ref{subsec:comparison}, we discuss the techniques from the above approach.

To the best of our knowledge, there is no known (subgame perfect) implementation of the lexicographic maxmin solution. There is some work on a repeated arbitration procedure \cite{bossert1995arbitration} which leads to the lexicographic maxmin solution in equilibrium. This procedure falls short of an \emph{implementation} in that it relies on the designer's having knowledge of the players' utility functions -- As pointed out by Bossert and Tan \cite{bossert1995arbitration},
\begin{quote}
``The assumption of utility information being available to the decision maker clearly is quite strong. \ldots For instance, it would be useful to consider an environment in which
the players know each other's utilities, but the arbitrator does not know the players' utility functions. "
\end{quote}
 We tackle exactly this problem, and by utilizing novel combinatorial properties of the lexicographic maxmin solution, we provide a mechanism for its implementation. We provide a quick overview of our results and technical contributions in Section \ref{sec:results}.

While our mechanism is technically intricate, there are three aspects of it that are interesting from a practical perspective.
\begin{enumerate}
\item Our subroutine mechanism (the Knockout mechanism, Section \ref{subsec:knockout}) is simpler, and interesting in its own right. When the players are given a choice between the lexicographic maxmin solution and any other outcome, the Knockout mechanism implements the former.
\item Our mechanism offers insight into the nature of offers and counter offers required to arrive at the desired solution.
\item The gameplay terminates in $\log n$ steps ($n$ is the number of agents) in equilibrium, i.e., when well-meaning and well-informed agents propose the right solution.
\end{enumerate}

Our mechanism used the standard assumption (Assumption \ref{ass:favorite}) that the space of outcomes is such that in any player's best outcome, all the surplus goes to her, and every one else gets no utility. This assumption is commonplace in the literature on implementation of bargaining solutions. For example, it is employed in the work by Moulin  \cite{moulin1984implementing}, and the various extensions of his approach \cite{miyagawa2002subgame,samejima2005note}. The bargaining problem without the above assumption is also interesting. Not much work has been done in this setting, besides the implementation of the Nash Bargaining solution \cite{howard1992social}. We leave as open the  problem of implementing any of the various bargaining solutions, including the lexicographic maxmin solution, absent the assumption that a single player can grab the entire surplus, or proving that such implementations are impossible.

\section{A summary of results and techniques}\label{sec:results}
The problem of interest in this paper can be stated as follows: Given $n$ players, and a convex space of outcomes, can we design a mechanism, i.e., a repeated game with complete information, for which the lexicographic maxmin solution (in reference to a disagreement point) is the unique subgame perfect outcome? Our main result is an affirmative answer to this question. 

As mentioned earlier, solving for a subgame perfect equilibrium in this case is necessitated by the fact that bargaining inherently involves offers and counter offers, and thereby a repeated game. Because the designer has no knowledge of the players' preferences, the typical building blocks of implementation mechanisms are proposals and threats. Nash equilibrium is an inadequate solution concept for it opens up the possibility of non-credible threats by players, leading to many meaningless equilibria \cite{rubinstein1982perfect}. On the other hand, implementation in dominant strategies is difficult for the bargaining problem because we have neither the possibility of side payments (as in VCG mechanisms) nor the explicit goal of divvying up a finite set of resources as in cake-cutting. The cake-cutting problem is a special case of our setting, with the disagreement point corresponding to no one getting any part of the cake. We must note here that for the cake-cutting problem with piecewise uniform utilities (where the utility of each agent for a unit portion of the cake is either a global constant or zero), it can be shown that the truthful mechanism of Chen et al.  \cite{chen2013truth} implements the lexicographic maxmin solution, even though they do not claim it explicitly. However, for piece-wise linear utilities, there is no truthful mechanism that implements the lexicographic maxmin solution, to the best of our knowledge. We also note that our mechanism is not coalition-proof. Similar to the above result of Chen et al. \cite{chen2013truth}, the strategy-proofness of the lexicographic maxmin solution has been established in a variety of domains with dichotomous preferences \cite{ghodsi2011dominant,parkes2015beyond,bogomolnaia2004random}. There is also very interesting work \cite{kurokawa2015leximin} generalizing, and so unifying, all of the above results.

We design our mechanism in two stages. First, we will consider a simpler problem (see Section \ref{sec:simple}) - given $n$ players who are offered a choice between the lexicographic maxmin solution $\bfu^*$ (with the same disagreement point as before), and some other outcome $\bfu$, is there a mechanism that implements $\bfu^*$? 

The mechanism from the above problem will be used as a subroutine in constructing our full mechanism. We describe this in Section \ref{sec:fullmechanism}. In short, we run a binary tree of games, where each node corresponds to a mechanism where the players to choose between just two outcomes. 

This outer mechanism is inspired by the mechanism of Howard \cite{howard1992social} which implements the Nash Bargaining solution, although the idea of using a binary tree is novel. A complete description of our game form requires a total number of rounds of the order of $n^2 \log n$ overall. Modeling our outer mechanism analogous to that of the mechanism of Howard \cite{howard1992social} (that implements the Nash solution) would lead to a description with $n^3$ rounds. In fact, our idea of the outer binary tree can be used to simplify Howard's mechanism \cite{howard1992social} for the implementation of the Nash bargaining solution.

To design our inner subroutine, we need a property that
\begin{itemize}
\item distinguishes the lexicographic maxmin solution from any other solution, and
\item can be exploited in a such a way that a game form based on it leads to $\bfu^*$ being necessarily chosen as the only subgame perfect outcome.
\end{itemize} 

Just the property that the lexicographic maxmin solution dominates every other outcome with respect to lexicographical ordering is inadequate for the above purpose (discussed in Section \ref{subsec:comparison}). Therefore, we identify a novel relation between any two vectors called \emph{disagreement dominance}. For any two vectors $\bfu, \bfv$, the disagreement dominance relation between the two is obtained by checking the lexicographical dominance relation between the ``disagreement projection" of $\bfu$ onto $\bfv$ and vice versa; where the disagreement projection of $\bfu$ onto $\bfv$ measures the utilities in $\bfu$ just for those players that prefer $\bfv$ over $\bfu$. The disagreement dominance relation has the very interesting property that the lexicographic maxmin solution disagreement dominates every other outcome\footnote{To the curious reader: disagreement dominance does not always agree with lexicographic ordering (see the example at the end of Section \ref{sec:disdom}).}.

%%%%%%%%%%%%%
\section{The bargaining problem: Model and Definitions}\label{sec:model}
We now describe the model of the bargaining problem, closely following Moulin \cite{moulin1984implementing}). We are given a finite set of alternatives denoted by $A$\footnote{This can be easily generalized to the case of infinite sets.}. These alternatives abstractly correspond to all the possible discrete outcomes. To make the overall space of outcomes convex, all lotteries over these discrete outcomes are added as possible outcomes. Let $P(A)$ be the set of probability distributions over $A$ :

\begin{align*}
P(A) = \{l \in \mathbb{R}^A | \; l_a  \geq 0 \; \forall a \in A, \; \sum_{a \in A} l_a = 1 \}.
\end{align*}

In other words, $P(A)$ is the set of all lotteries over $A$. An element $l$ of $P(A)$ is attained by a lottery that randomly selects outcome $a$ with probability $l_a$.	

The set of agents is denoted by $\mathcal{N} = \{1,2,\ldots, n\}$. Each agent $i \in \mathcal{N}$ has Von Neumann-Morgenstern preferences over $P(A)$, which can be represented via a vector $u^{i} \in \mathbb{R}^A$, and her utility for any lottery $l$ is given by:
\begin{align*}
U_i(l) = \sum_{a \in A} u^{i}_a l_a.
\end{align*}

An agent $i$ prefers a lottery $l$ over $l^\prime$ if and only if $U_i(l) > U_i(l^\prime)$. The agents are also assumed to be risk-neutral.

We are also given an exogenous lottery $s^*$, called the status quo outcome or disagreement point. This outcome represents the final outcome in the case where cooperation between agents breaks down, and they fail to reach a consensus. The designer, or an outside mediator has power to enforce this outcome in such a situation. We now formally define what is meant by a bargaining solution.

\begin{definition}[Bargaining Solution \cite{gaertner2009primer}]\label{def:barg_sol}
Given the set of alternatives $A$, the set of agents $\mathcal{N}$, and a status quo point $s^*$, a \emph{bargaining solution} is any function that maps any given utility profile $\bfu=(u^i)_{i \in \mathcal{N}}$ to a lottery $S(\bfu) \in P(A)$, such that 
\begin{itemize}
\item $S(\bfu)$ is individually rational, i.e., 
\begin{align*}
\forall i \in \mathcal{N}: ~ U_s(s^*) \leq U_i(S(\bfu)),
\end{align*}
\item $S(\bfu)$ is invariant under linear transformations of $\bfu$, i.e., for any scalars $\alpha_i > 0$, and $\beta_i$
\begin{align*}
v^i = \alpha_i u^i + \beta_i \mathbbm{1}, ~\forall i \in \mathcal{N} ~\implies S(\bfv) = S(\bfu),
\end{align*}
where $\mathbbm{1} \in \mathbb{R}^A$ is a vector with all components equal to $1$.
\end{itemize}
\end{definition}

For any utility profile $\bfu$, denote by $\mathbf{I}(\mathbf{u})$ the set of all individually rational lotteries at $\mathbf{u}$:
\begin{align}\label{eqn:IR}
\mathbf{I}(\mathbf{u}) = \{l \in P(A) | \; U_i(s^*) \leq U_i(l) \; \forall i \in \mathcal{N} \}.
\end{align}
We assume that $\mathbf{I}(\mathbf{u})$ is non-empty, and for every $\bfu$, limit the set of feasible outcomes to $\mathbf{I}(\bfu)$. 

We only allow those utility profiles $\bfu$ which satisfy the following assumption:

\begin{assumption}\label{ass:favorite}
For any agent $i$, all her favorite outcomes $l^*_i \in \arg \max_{l \in \mathbf{I}(\mathbf{u})} U_i(l)$ are such that  
\begin{align*}
U_j(l_i^*) &= U_j(s^*), \; &\forall j \neq i.
\end{align*}
\end{assumption}

This assumption has been indispensably used in the literature on implementation of bargaining solutions \cite{samejima2005note,moulin1984implementing}. It usually holds for bargaining over private resources such as in fair division and other problems. We now look at some specific problem instances where the above assumption holds.
\paragraph{Some motivating examples:} 
\begin{enumerate}[(A)]
\item Cake-cutting \cite{robertson1998cake,chen2013truth,brams1996fair}: Consider a heterogeneous cake given by the interval $[0,1]$. Each player $i$ has a value density function $v_i: [0,1] \to [0,\infty)$, and for any piece (a finite union of subintervals) of the cake $X$, her utility is given by $V_i(X) = \int_X v_i(x)dx$. The disagreement point is the outcome where no player gets any part of the cake. Moreover, if the players are \emph{hungry} \cite{branzei2015dictatorship}, i.e., the value density functions $v_i(.)$'s are strictly positive, then Assumption \ref{ass:favorite} is satisfied -- since the favorite outcome for each player is one where she gets the entire cake, and every other player gets nothing.
\item Wireless Relay Networks \cite{cao2012power,park2007bargaining,zhang2009fair}: Consider a wireless network with many agents communicating with their destinations via a single wireless relay. The relay has a total available power $P$, and this has to be divided among the agents. Denote the power allocated to agent $i$ by $P_i$, then its quality-of-service -- given by the effective SNR(Signal to Noise Ratio) -- can be described as
\begin{align*}
V_i(P_i) = \frac{a_i P_i}{b_i P_i + 1},
\end{align*}
where $a_i,b_i$ are agent-specific constants. We omit a discussion of the roots of the above expression (we point the curious reader to Cao et al. \cite{cao2012power} for details).

The disagreement point is given by an allocation of zero power to each agent. The function $V_i(P_i)$ is strictly monotone in $P_i$, whereby the best outcome for each agent is where she gets allocated all the power (i.e. $P_i = P$) -- thereby satisfying Assumption \ref{ass:favorite}.

\end{enumerate} 

Although we described our model starting with a finite (material) outcome space (and then taking its convex hull via lotteries), our results can be easily extended to the case where we start with a infinite set of outcomes. In this case, the space of utility outcomes (see Definition \ref{def:utilspace}) could be any general convex and compact space (as a subset of $R^n$). Given a convex and compact space of utility outcomes, Assumption \ref{ass:favorite} is also implied indirectly by certain restrictive but natural assumptions -- namely strict convexity and free disposal of utilities \cite{miyagawa2002subgame}. 

\subsection{Lexicographic maxmin solution}\label{subsec:lexmaxmin}
To satisfy the axiom of scale-invariance that every bargaining solution must satisfy (see Definition \ref{def:barg_sol}), we must use relative utility gains in our definitions. Instead, for the sake for convenience, and without loss of generality, let us assume the standard normalization of VNM utilities:

\begin{align}\label{eqn:normalization}
\forall i \in \mathcal{N},  ~ U_i(s^*) = 0, & \mbox{ and }
\underset{l \in \bfI(\mathbf{u})}{\max} U_i(l) = 1.
\end{align}

We now define what lexicographical ordering means in general. Given any vector $\bfp \in \mathbb{R}^n$, denote by $p_{(1)} \geq p_{(2)} \geq \ldots \geq p_{(n)}$ its components arranged in some non-increasing order. We say $\bfp \succ_L \bfq$ if and only if, for some $j$, $p_{(j)} > q_{(j)}$, and for $\forall i > j$, we have $p_{(i)} \geq q_{(i)}$. And $\bfp \sim_L \bfq$ if and only if $\forall i$, we have $p_{(i)} = q_{(i)}$. We say $\bfp \succeq_L \bfq$ if and only if $\bfp \succ_L \bfq$ or $\bfp \sim_L \bfq$.

\begin{definition} The lexicographic maxmin solution is a lottery $l^*$ such that  $\bfu(l^*) \succeq_L \bfu(l)$ for all $l \in p(A)$.
\end{definition}

The existence and uniqueness of such a solution is a consequence of the fact that the space $D = \{x \in \mathbb{R}^n : \exists l \in P(A), \mbox{ such that } x = [U_1(l),U_2(l), \ldots, U_n(l)]\}$ is convex (since P(A) is convex). We state the following lemma.

\begin{lemma}\label{lem:unique}
There exists a unique $\bfu^* \in D$ such that $\bfu^* \succ_L \bfu$ for any $\bfu \neq \bfu^*$.
\end{lemma}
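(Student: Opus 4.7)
Plan: The set $D$ is convex, as the image of the convex set $P(A)$ under the linear map $l \mapsto (U_1(l), \ldots, U_n(l))$, and compact, as the image of a compact set under a continuous map; these two properties will do all the work. For existence, I would define the continuous sort map $\sigma \colon D \to \mathbb{R}^n$ sending each $\bfu$ to its coordinates written in non-increasing order and lex-maximize on the compact image $\sigma(D)$ by the standard iterated procedure: first maximize the last coordinate (the smallest entry) over $\sigma(D)$, which is attained since a continuous function on a compact set achieves its supremum; then restrict to the compact set of optima and maximize the $(n-1)$-th coordinate; and so on. After $n$ such steps one singles out a unique sorted vector $\bfp^* \in \sigma(D)$, and any $\bfu^* \in \sigma^{-1}(\bfp^*) \cap D$ is a lex-maximal element of $D$.

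The substantive part of the lemma is the strictness clause: $\bfu^* \succ_L \bfu$ for every $\bfu \neq \bfu^*$, equivalently, $\sigma^{-1}(\bfp^*) \cap D$ is a singleton. I would argue by contradiction, using convexity. Suppose two distinct $\bfu^*, \bfv^* \in D$ both realize $\bfp^*$ and set $\mathbf{w} = \tfrac{1}{2}(\bfu^* + \bfv^*) \in D$. Let $v_1 < v_2 < \cdots < v_m$ be the distinct values appearing in $\bfp^*$ and write $I_k(\bfu) = \{i : u_i = v_k\}$. Since $|I_k(\bfu^*)| = |I_k(\bfv^*)|$ for every $k$ (because the sorted vectors agree) yet $\bfu^* \neq \bfv^*$, there exists a smallest $k^*$ with $I_{k^*}(\bfu^*) \neq I_{k^*}(\bfv^*)$. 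A direct check then gives $w_i = v_k$ for $i \in I_k(\bfu^*) = I_k(\bfv^*)$ whenever $k < k^*$; and at level $k^*$, $w_i = v_{k^*}$ on $I_{k^*}(\bfu^*) \cap I_{k^*}(\bfv^*)$, while $w_i > v_{k^*}$ both on the symmetric difference and on every remaining index not yet accounted for. Consequently $\sigma(\mathbf{w})$ matches $\bfp^*$ in every position corresponding to a value below $v_{k^*}$ but has strictly fewer components equal to $v_{k^*}$, forcing $\mathbf{w} \succ_L \bfu^*$ and contradicting lex-maximality of $\bfu^*$.

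The main obstacle is the bookkeeping in this last step: faithfully translating the statement ``the level sets first diverge at value $v_{k^*}$'' into a coordinate-wise description of $\mathbf{w}$ and then into a $\succ_L$-comparison between $\sigma(\mathbf{w})$ and $\bfp^*$, while being careful that the strict gain at the critical sorted position is not cancelled by a loss at some lower-indexed (i.e., smaller-value) position. The minimality of $k^*$ and the resulting coincidence of the level sets $I_k(\bfu^*) = I_k(\bfv^*)$ for $k < k^*$ are precisely what prevent any such loss, after which existence and strict uniqueness both fall out.
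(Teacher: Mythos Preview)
The paper does not actually prove this lemma; it only states it, remarking just beforehand that existence and uniqueness are ``a consequence of the fact that the space $D$ \ldots\ is convex (since $P(A)$ is convex).'' Your proposal supplies precisely the argument the paper omits, and it is correct: compactness of $D$ (as the continuous linear image of the simplex $P(A)$) gives existence of a lex-maximal sorted vector via the iterated maximization, and your midpoint-plus-level-set argument gives the strict uniqueness. The key step --- that if $\bfu^*\neq\bfv^*$ have the same sorted vector then $\tfrac12(\bfu^*+\bfv^*)\succ_L\bfu^*$ --- is exactly the convexity-based contradiction the paper is alluding to, and your bookkeeping via the minimal index $k^*$ at which the level sets $I_{k^*}(\bfu^*)$ and $I_{k^*}(\bfv^*)$ differ is the clean way to carry it out. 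Incidentally, this same midpoint idea reappears later in the paper as the engine of Lemma~\ref{lem:maximin-disagree}, so your instinct here is well aligned with the paper's techniques.
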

It is easy to see that the above solution is guaranteed to be Pareto optimal. The Kalai-Smorodinsky solution in contrast is not guaranteed to be Pareto optimal.

We now illustrate the lexicographic maxmin solution with an example.

\begin{example}\label{eg:roth}
Consider a three person bargaining problem whose disagreement point $s^*$ is such that the utilities of the agents for it is $(0,0,0)$. The set of alternatives is given by $s^*$ and five other points $a,b,c,d,e$ for which the utilities are $(1,0,0)$, $(0,1,0)$, $(0,0,1)$, $(0.6,0.6,0.7)$, $(0.6,0.6,0)$ respectively. Clearly the lexicographic maxmin solution is $(0.6,0.6,0.7)$. The Kalai-Smorodinsky solution (see Appendix \ref{app:KS} for a definition) here is one that gives utilities $(0.6,0.6,0.6)$ to the agents, which is Pareto dominated. 
\end{example}

The lexicographic maxmin solution is uniquely characterized by a set of axioms generalizing those of Kalai-Smorodinsky \cite{imai1983individual}. Just for the case of two players, however, the two solutions coincide. For more than two players, they are different as seen in Example \ref{eg:roth}.

From here on forward, we will just use the convex space $D$ of utility outcomes, as opposed to dealing with the space of lotteries. Below, we summarize the relevant definitions in terms of $D$.
\begin{definition}\label{def:utilspace}
The space of utility outcomes is given by 
\begin{align*}
D = \{x \in \mathbb{R}^n : \exists l \in P(A), \mbox{ such that } x = [U_1(l),U_2(l), \ldots, U_n(l)]\}.
\end{align*}
\end{definition}

\begin{definition}[Lexicographic maxmin solution]\label{def:lexmaxmin}
The lexicographic maxmin solution is the unique point $\bfu^* \in D$ such that $\bfu^* \succ_L \bfu$ for all $\bfu \neq \bfu^*$.
\end{definition}

The following restatement of Assumption \ref{ass:favorite} in terms of the quantity $D$ will be more convenient for us in the following sections.
\begin{assumption}\label{ass:favorite2}
For any agent $i$, there exists a unique ideal point $\bfs^i \in D$, and moreover it satisfies   
\begin{align*}
s^i_i &= 1,\\
s^i_j &= 0, ~\forall j \neq i.
\end{align*}
\end{assumption}

The problem of implementing the lexicographic solution can be stated as follows:
\begin{definition}[Implementation]
Given $n$ players with complete information, collectively the set $\mathcal{N}$, and the utility space $D$, a mechanism $M(D,\mathcal{N},\bfs^*)$ is said to implement $\bfu^*$ if and only if $\bfu^*$ is the unique subgame perfect outcome of $M(D,\mathcal{N},\bfs^*)$.
\end{definition}

In other words, a mechanism $M(D,\mathcal{N}, \bfs^*)$ is said to implement $\bfu^*$ when $M(D,\mathcal{N},\bfs^*)$ always admits a subgame perfect equilibrium, and $\bfu^*$ is the outcome of any such equilibrium.

\subsection{Comparison with previous work}\label{subsec:comparison}
The implementation of the Kalai-Smorodinsky solution is possible via a mechanism that has a single round of simultaneous bidding, followed by a series of accept/reject decisions by the players \cite{moulin1984implementing}. The framework of Moulin's mechanism \cite{moulin1984implementing} has been extended to many other bargaining solutions \cite{samejima2005note,miyagawa2002subgame}.
In particular, it is known that any bargaining solution that is equivalent to maximizing a convex function of the players' utilities can be implemented as a subgame perfect equilibrium by a mechanism that does a pass over the entire list of players twice \cite{miyagawa2002subgame}. Such a mechanism works roughly as follows: In the first pass, each agent gets to propose an outcome, and a corresponding ``bid" (a vector of promised utilities to all the players). These bids are arranged in descending order according to the desired convex function, and the first player is designated the winner. Each player then gets to accept the winner's proposed outcome, or reject it and implement a dictatorial threat \cite{miyagawa2002subgame}. These games are much simpler than the one we come up with in this paper: the full description of these game involves a total number of rounds of the order of $n$.

Because the lexicographic maxmin solution does not correspond to maximizing a convex function of utilities, the above approach fails to work. A natural question is whether using lexicographic ordering to order the bids in the above approach solves our problem. In the remainder of this section, we will describe why this is not the case. We are therefore forced to look for more complicated game forms. A similar hurdle was faced by Howard  \cite{howard1992social} in trying to design a mechanism that implements the Nash solution without the standard assumption that every player grabs all the surplus in her favorite outcome, while the others get nothing.

Without going into the details of Moulin's mechanism (that implements the Kalai-Smorodinsky solution) \cite{moulin1984implementing} and other related ones, we would like to illustrate, by way of a concrete example, why direct extensions of these approaches do not work for the lexicographic maxmin solution. This will shed light on the complexity of our problem.

A direct extension of Moulin's mechanism \cite{moulin1984implementing} to our problem would roughly be the following: Each player $i$ declares a bid $\bfp^i \in [0,1]^n$, and these are ordered in accordance with lexicographic ordering. A player whose bid, say $\bfp$, lexicographically dominates every other bid is declared the winner, and gets to propose an outcome $\bfll$ in $D$. Each other player $i$ can either 
\begin{enumerate}
\item accept $\bfll$, or,
\item reject to impose the (lottery) outcome $p_i \bfs^i + (1-p_i) s^*$, which gives her a utility of $p_i$, as the final outcome.
\end{enumerate} 

If all players accept $\bfll$ then it becomes the final outcome. Note that when some player rejects, the utility for the winner, and indeed all others is $0$ by Assumption \ref{ass:favorite2}. 

While the lexicographic maxmin is a subgame perfect equilibrium of this game, it's not the only equilibrium. The fact that a player can lie about her own utility level in her bid (since it is not checked by anyone's threat) leads to many meaningless equilibria. Let's look at an example to see why this is the case.

\begin{example}\label{eg:moulinfail}
Recall the outcome space in Example \ref{eg:roth}: given 3 players, let the space $D$ be given by all points in the convex hull of the points
\begin{align*}
(0,0,0), (1,0,0), (0,1,0), (0,0,1), (0.6,0.6,0.7), (0.6,0.6,0).
\end{align*}
The disagreement point, as usual, is $(0,0,0)$.

The lexicographic maxmin solution here is $(0.6, 0.6, 0.7)$. 
\end{example}

In the above example, say player 2 bids $(0.6, 1, 0.65)$. Player 1 has no reason to challenge this bid, as she is guaranteed a utility of $0.6$, and there is no point where she can get a higher utility while ensuring that player 2 gets at least $0.6$. If player 3 bids above (lexicographically speaking) player 2's bid, and then proposes at outcome where she gets a utility more than $0.55$, her proposed outcome is bound to get rejected by someone. We now explain why this happens.

Even if player 3 bids $\bfp$ such that $p_3 = 1$\footnote{There's no reason to bid less than 1 here, as doing so can only hurt the lexicographic ordering of her whole bid.}, then the other components in her bid have to together lexicographically dominate $(0.6, 0.65)$. There is no outcome that ensures these levels of utility for the other players. Therefore, this leads us to an equilibrium where player 2 wins the bid and can possibly propose the outcome $(0.6, 0.6, 0.65)$ which is accepted by others.

Note that the above analysis holds even when the bids are done in turn. If the bids are simultaneous as in the game of Moulin's mechanism \cite{moulin1984implementing}, simpler examples of bad equilibria are possible. 

As a result, current techniques are inadequate to design a mechanism that implements the lexicographic maxmin. 
%%%%%%%%%%%%%%
\section{Implementing the lexicographic maxmin solution}\label{sec:simple}

\sloppy Before going into the general problem of implementing the lexicographic maxmin solution, we will look at a simpler problem: Given $\bfu^*$ and another outcome $\bfu$ (with the same disagreement point as before), can we design a mechanism which implements, for any given number of players, the solution $\bfu^*$? We answer this question by designing the Knockout mechanism. This mechanism will be an important subroutine in implementing our full mechanism for the implementation of the lexicographic maxmin solution.

Let us first characterize a novel combinatorial property of the lexicographic maxmin that will be of immense use to us.

\subsection{Disagreement dominance}\label{sec:disdom}
As mentioned before, given two utility outcomes, say $\bfu$ and $\bfv$, we use the \emph{disagreement-dominance} relation as a way of comparing the two.

For any pair of utility outcomes, $\bfu$ and $\bfv$, we define another vector $\pi(\bfu,\bfv)$ of the same size, which we call the \emph{disagreement} projection of $\bfu$ onto $\bfv$.
\begin{align*}
\left( \pi(\bfu,\bfv) \right)_i = \begin{cases}
u_i, \mbox{ if } u_i < v_i, \\
1, \mbox{ otherwise. }
\end{cases}
\end{align*}

We define the disagreement-dominance (\emph{$\mathcal{D}$-dominance}) relation as follows:
\begin{definition}[$\mathcal{D}\text{-}\mathrm{dominance}$]
$\bfu \relDsim \bfv$ if and only if $\pi(\bfu,\bfv) \succeq_L \pi(\bfv,\bfu)$. Moreover, if $\pi(\bfu,\bfv) \succ_L \pi(\bfv,\bfu)$, then $\bfu \relD \bfv$.
\end{definition}

The above relation is essentially comparing, in a leximin fashion, the utilities in $\bfu$ of those agents that prefer $\bfv$ over $\bfu$, and the utilities in $\bfv$ of those agents that prefer $\bfu$ over $\bfv$.

The following lemma, which characterizes the relationship between the lexicographical ordering and the $\mathcal{D}$-dominance relation, is a major technical contribution of this paper.

\begin{lemma}\label{lem:maximin-disagree}
The lexicographical maxmin solution $\bfu^*$ strictly $\mathcal{D}$-dominates any other feasible utility outcome $\bfu$, where $\bfu \neq \bfu^*$.
\end{lemma}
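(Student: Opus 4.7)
The plan is to reduce the statement to showing $\pi(\bfu^*,\bfu)_{(n)} > \pi(\bfu,\bfu^*)_{(n)}$, i.e., the smallest entry of $\pi(\bfu^*,\bfu)$ strictly exceeds the smallest entry of $\pi(\bfu,\bfu^*)$. Once that is done, taking $j=n$ in the paper's definition of $\succ_L$ (the ``$\forall i>j$'' requirement is then vacuous) immediately gives $\pi(\bfu^*,\bfu) \succ_L \pi(\bfu,\bfu^*)$, i.e.\ strict $\mathcal{D}$-dominance. To set this up, let $t_1 < t_2 < \cdots < t_K$ be the distinct values of $\bfu^*$, $B_k = \{i : u^*_i = t_k\}$ the corresponding level sets, and partition the agents as $S^+ = \{i : u_i > u^*_i\}$, $S^- = \{i : u_i < u^*_i\}$, $S^0 = \{i : u_i = u^*_i\}$. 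The case $S^- = \emptyset$ cannot occur (else $\bfu$ would Pareto-dominate $\bfu^*$, contradicting Lemma~\ref{lem:unique}); if $S^+ = \emptyset$ then $\pi(\bfu^*,\bfu)$ is all $1$'s and the reduction is immediate, so assume both $S^+$ and $S^-$ are non-empty.

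The key technical claim is: if $B_1 \cup \cdots \cup B_{k-1} \subseteq S^0$, then either $B_k \subseteq S^0$ as well, or $B_k \cap S^- \neq \emptyset$. Suppose for contradiction that $B_k \cap S^- = \emptyset$ but $B_k \cap S^+ \neq \emptyset$, and consider $\mathbf{w}_\lambda = (1-\lambda)\bfu^* + \lambda \bfu \in D$ for $\lambda > 0$. Under the hypothesis, $w_i(\lambda) = u^*_i$ on $B_1 \cup \cdots \cup B_{k-1}$; moreover $w_i(\lambda) = t_k$ on $B_k \cap S^0$, $w_i(\lambda) > t_k$ on $B_k \cap S^+$, and, for $\lambda$ small enough, $w_i(\lambda) > t_k$ on $B_{k+1} \cup \cdots \cup B_K$. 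Sorting non-increasingly, $\mathbf{w}_\lambda$ and $\bfu^*$ agree at the bottom through the $B_1 \cup \cdots \cup B_{k-1}$ block and through the next $|B_k \cap S^0|$ positions, where both have value $t_k$; at the following position, $\bfu^*$ still has value $t_k$ (we remain inside its $B_k$ block since $|B_k \cap S^+| \geq 1$) while $\mathbf{w}_\lambda$ takes the smallest of $\{t_k + \lambda(u_i - t_k) : i \in B_k \cap S^+\}$, which is strictly larger than $t_k$. Taking $j$ at this position in the definition of $\succ_L$ (with all lower positions matching, so the ``$\forall i > j$'' condition is satisfied with equality) yields $\mathbf{w}_\lambda \succ_L \bfu^*$, contradicting the strict leximin-optimality of $\bfu^*$ from Lemma~\ref{lem:unique}.

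Let $k^*$ be the smallest level at which $\bfu$ disagrees with $\bfu^*$; it exists since $\bfu \neq \bfu^*$. By the claim applied at $k^*$, $B_{k^*} \cap S^- \neq \emptyset$, so there is $i^*$ with $u_{i^*} < u^*_{i^*} = t_{k^*}$, giving $\min_{j \in S^-} u_j < t_{k^*}$. Every $i \in S^+$ lies in $B_k$ for some $k \geq k^*$ (since lower levels are entirely in $S^0$), so $u^*_i \geq t_{k^*}$, hence $\min_{i \in S^+} u^*_i \geq t_{k^*}$. The non-$1$ entries of $\pi(\bfu^*,\bfu)$ are exactly $\{u^*_i : i \in S^+\}$ and of $\pi(\bfu,\bfu^*)$ are $\{u_j : j \in S^-\}$, so $\pi(\bfu^*,\bfu)_{(n)} = \min_{i \in S^+} u^*_i \geq t_{k^*} > \min_{j \in S^-} u_j = \pi(\bfu,\bfu^*)_{(n)}$, completing the reduction.

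I expect the main obstacle to be the contradiction inside the key claim: the paper's $\succ_L$ requires a witness position $j$ at which $\mathbf{w}_\lambda$ strictly beats $\bfu^*$, \emph{and} a weak-dominance condition at every lower position $i > j$. What makes the argument work is that for small $\lambda$ the sorted $\mathbf{w}_\lambda$ has a clean block structure mirroring that of $\bfu^*$, perturbed only within each $B_k$; the natural witness is the boundary position between the $B_k \cap S^0$ cluster and the $B_k \cap S^+$ cluster in the sorted $\mathbf{w}_\lambda$, after which all lower positions match the sorted $\bfu^*$ exactly, so the weak-dominance condition comes for free.
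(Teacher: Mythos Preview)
Your proof is correct and rests on the same core idea as the paper's: use convexity of $D$ to produce a point that leximin-dominates $\bfu^*$, contradicting Lemma~\ref{lem:unique}. The paper argues by contradiction at the top level (assuming $\pi(\bfu,\bfu^*) \succeq_L \pi(\bfu^*,\bfu)$, hence $\min_{i\in S^-} u_i \geq \min_{i\in S^+} u^*_i$) and then shows the midpoint $\tfrac{1}{2}(\bfu+\bfu^*)$ beats $\bfu^*$; you argue directly, isolating the structural claim that the lowest $\bfu^*$-level containing any disagreement must meet $S^-$, and use $(1-\lambda)\bfu^*+\lambda\bfu$ for small $\lambda$. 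The two critical thresholds coincide---your $t_{k^*}$ is exactly the paper's $\alpha=\min_{i\in X\cup Y}u^*_i$---and both arguments ultimately establish the same inequality $\min_{i\in S^+} u^*_i > \min_{i\in S^-} u_i$ on the smallest entries of the two projections. Your level-set packaging makes the location of the contradiction more explicit; the price is needing the small-$\lambda$ continuity argument to control the higher levels, whereas the paper's midpoint works outright because its contradiction hypothesis already lower-bounds $u_i$ on $S^-$ by $\alpha$.
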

\begin{proof}
Assume to the contrary that $\bfu \relDsim \bfu^*$, i.e.,  $\pi(\bfu, \bfu^*) \succeq_L \pi(\bfu^*, \bfu)$.
We will show shortly that this assumption leads to the absurdity that $\frac{1}{2}( \bfu + \bfu^* ) \succ_L \bfu^* $.

Divide the players $\mathcal{\mathcal{N}} = \{1,2, \ldots, n\}$ into three sets $X,Y,Z$ as follows: 
\begin{align*}
X = \{ i \in \mathcal{N} : u_i > u^*_i \}, \\
Y = \{ i \in \mathcal{N} : u^*_i > u_i \}, \\
Z = \{ i \in \mathcal{N} : u_i = u^*_i \}.
\end{align*}

Without loss of generality we can assume that both $X$ and $Y$ are non-empty. Of course, both can't be empty simultaneously, for then $\bfu = \bfu^*$. If only $Y$ is empty, then $\bfu \succ_L \bfu^*$, which is an absurdity. If only $X$ is empty, then $\bfu^* \relD \bfu$ trivially.

Let $u^*_{(1)} \leq u^*_{(2)} \leq \ldots \leq u^*_{(n)}$ denote the components of $\bfu^*$ arranged in non-decreasing order , and let $\sigma(k)$ be the agent corresponding to $u^*_{(k)}$.

Let $j$ be the smallest value such that $\sigma(j) \in X \cup Y$, and let $u^*_{\sigma(j)} = \alpha$. Since ties can be broken arbitrarily in the above ordering of the components, assume without loss of generality that
\begin{align}\label{eqn:accouting}
\forall j^\prime \geq j, ~u^*_{\sigma(j^\prime)} > \alpha \mbox{ whenever }j^\prime \in Z.
\end{align}

By definition, for $j^\prime < j$, we have $u^*_{\sigma(j)} = u_{\sigma(j)}$. In addition, if we show that for $j^\prime \geq j$, $\frac{1}{2}(u^*_{\sigma(j)} + u_{\sigma(j)}) > \alpha$, then $\frac{1}{2}(\bfu^* + \bfu) \succ_L \bfu^*$ which is a contradiction according to Lemma \ref{lem:unique}.

Let $\gamma = \min_{i \in X} u^*_i$, and $\delta =  \min_{i \in Y} u_i$. Since $\alpha = \min_{i \in X \cup Y} u^*_i$, we have $\gamma \geq \alpha$.
Since $\pi(\bfu, \bfu^*) \succeq_L \pi(\bfu^*, \bfu)$, we have $\min_{i \in Y} u_i \geq \min_{i \in X} u^*_i$, i.e., $\delta \geq \gamma$. Combining the above two arguments, we have $\delta \geq \alpha$.

In other words, we have
\begin{align*}
\forall i \in X, \; u^*_i \geq \alpha, \\
\forall i \in Y, \; u_i \geq \alpha.
\end{align*}

Now consider any $j^\prime \geq j$. There are three possible cases here: \\
\emph{(A) $\sigma(j^\prime) \in X$:} In this case, $u_{\sigma(j^\prime)} > u^*_{\sigma(j^\prime)}$, and $u^*_{\sigma(j^\prime)} \geq \alpha$. Therefore $\frac{1}{2}(u^*_{\sigma(j^\prime)} + u_{\sigma(j^\prime)}) > \alpha$.

\emph{(B) $\sigma(j^\prime) \in Y$:} In this case, $u^*_{\sigma(j^\prime)} > u_{\sigma(j^\prime)}$, and $u_{\sigma(j^\prime)} \geq \alpha$. Therefore $\frac{1}{2}(u^*_{\sigma(j^\prime)} + u_{\sigma(j^\prime)}) > \alpha$.

\emph{(C) $\sigma(j^\prime) \in Z$:} By our assumption to simplify accounting (equation \ref{eqn:accouting} above), we already have $u^*_{\sigma(j^\prime)} > \alpha$, which automatically means $\frac{1}{2}(u^*_{\sigma(j^\prime)} + u_{\sigma(j^\prime)}) > \alpha$, since $u^*_{\sigma(j^\prime)} = u_{\sigma(j^\prime)}$. \qed
\end{proof}

We note here that lexicographic dominance does not imply $\mathcal{D}$-dominance. For example, take $\bfu = (0.1,0.2,0.3,0.4,0.5)$ and $\bfv = (0.3,0.3,0.4,0.1,0.2)$. Here $\bfu \succ_L \bfv$ but $\bfv \relD \bfu$. %The property of $\mathcal{D}$-dominance is guaranteed only when we consider the lexicographic maxmin solution $\bfu^*$.

\subsection{The Knockout mechanism}\label{subsec:knockout}
 Armed with Lemma \ref{lem:maximin-disagree}, we first look for a mechanism where the players are given a choice between $\bfu^*$ and some other outcome $\bfu$, with the same disagreement point as before, i.e., $\bfs^*$. Before we describe this game, we will need to define the mechanism $D(i,\bfp)$, agent $i$'s dictatorial mechanism \cite{moulin1984implementing}, which comes into force when $i$ rejects a proposal in the overall mechanism. 
 
\begin{definition}[$D(i,\bfp)$]\label{def:dict}
Agent $i$ proposes an outcome $\bfll^i$ for unanimous approval by other agents. If at least one agent rejects $\bfll^i$, the referee enforces the status quo $\bfs^*$. If everybody accepts $\bfll^i$, the referee enforces the lottery $p_i \bfll^i + (1-p_i)\bfs^*$.
\end{definition}

 A straightforward consequence of Assumption \ref{ass:favorite2} is that the outcome $\bfs^i$, i.e., $i$'s ideal outcome, is a subgame perfect proposal of $i$ in $D(i,\bfp)$. One might wonder why we could not just have a provision to directly implement $p_i \bfll^i + (1-p_i)\bfs^*$ based on player $i$'s choice of outcome. The point is that $\bfll^i$ could be Pareto-dominated by $s^*$, and we need a device to ensure individual rationality, with respect to $s^*$ being normalized to $(0,0,\ldots,0)$, when some agent rejects a proposed lottery.
 
We now define the Knockout mechanism $K(X, Y, \mathcal{N})$ on a pair of outcomes $X,Y$, and a set of players $\mathcal{N}$ of size $n$. The game will be defined recursively. \\

\textbf{Knockout Mechanism $K(X,Y,\mathcal{N})$}\footnote{Recall that $X$ and $Y$, like all feasible solutions, satisfy individual rationality with respect to $\bfs^*$ (see equation \ref{eqn:IR}), i.e., no player strictly prefers $\bfs^*$ over either X or Y.}

\emph{Round 1:} 
\begin{enumerate}[(A)]
\item Each players $i$ in turn chooses her preferred outcome $\bfll^i$ to be either $X$ or $Y$.
\item Each player $i$ bids a vector $\bfp^i \in [0,1]^n$ in turn\footnote{Assuming $i$ prefers $X$ over $Y$, her bid $\bfp^i$ here can be thought of as the levels of utility she promises to players who prefer $Y$ over $X$. Also, as mentioned in the proof of Lemma \ref{lem:indbase}, simultaneous bidding here leads to the existence of bad equilibria.}.
\item The players are re-ordered, with all ties broken lexicographically, such that 
\begin{align*}
&\bfp^1 \succeq_L \bfp^i, ~\forall ~i\neq 1, \mbox{ and } p^1_2 \leq p^1_3 \leq \ldots p^1_n,
\end{align*}
where $p^i_j$ is the component corresponding to player $j$ in the bid $\bfp^i$ made by player $i$.
\end{enumerate}

\emph{Round $j$ ($j \in \{2,3,\ldots,n\}$):}
Player $j$ has to make a choice between two options which vary based on the value of her value in player $1$'s winning bid, i.e., $p^1_j$:
\begin{enumerate}[(A)]
\item If $p^1_j < 1$, then player $k$ can either accept $\bfll^1$, or reject it. In the latter case, i.e., a rejection by player $j$, the game immediately reduces to $D(j,\bfp^1)$. 
\item If $p^1_j = 1$, the player $k$ can either accept $\bfll^1$, or reject it. In the latter case, i.e., a rejection by player $j$, the game immediately reduces to $K(X,Y,\mathcal{N}-\{1\})$ (with player $1$ ejected).
\end{enumerate}

If player $j$ accepts, then we move on to the next player $j+1$, and so on, to player $n$ in round $n$. 
If all players accept and we reach the end of round $n$, then $\bfll^1$ is the final outcome.

\subsection{Analyzing the Knockout mechanism}
The above mechanism implements the lexicographic maxmin solution when it is pitted against any other outcome. More formally, when one of the outcomes is $\bfu^*$, the above game admits a subgame perfect equilibrium, and that any subgame perfect equilibrium of this game results in a payoff of $u^*$ to the agents. In fact, we will prove a much stronger property of the Knockout mechanism in terms of the $\mathcal{D}$-dominance relation.

\begin{theorem}\label{thm:indextend}
If $X \relD Y$, then $K(X,Y,\mathcal{N})$ implements $X$.
\end{theorem}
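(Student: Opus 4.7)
The plan is to prove Theorem \ref{thm:indextend} by strong induction on $n = |\mathcal{N}|$. For brevity, call a player $i$ an \emph{X-player} if $X_i > Y_i$ and a \emph{Y-player} if $Y_i > X_i$. The base case $n=1$ is immediate: $X \relD Y$ reduces to $X_1 > Y_1$, so the lone player picks $X$ in Round~1. For the inductive step I will first prove a combinatorial restriction lemma on $\mathcal{D}$-dominance, and then use it with the inductive hypothesis to pin down the SPE outcome.

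The key lemma is: \emph{if $X \relD Y$ on $\mathcal{N}$ and $k \in \mathcal{N}$ satisfies $X_k \leq Y_k$ (equivalently $\pi(Y,X)_k = 1$), then $X \relD Y$ still holds on $\mathcal{N}\setminus\{k\}$.} I work with the non-decreasing sorted forms of $\pi(X,Y)$ and $\pi(Y,X)$. Removing $\pi(X,Y)_k \leq 1$ from the former only shifts sorted entries weakly upward beyond the removed position, while removing the value $1$ from the latter simply truncates its top. Because every entry of $\pi(X,Y)$ is at most $1$, the first sorted index $j^*$ at which $\pi(X,Y)$ strictly beats $\pi(Y,X)$ satisfies $j^* < n$, so the strict advantage at $j^*$ (after possibly shifting by one index) survives the deletion. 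This lemma is the chief obstacle in the entire proof; its analogue \emph{fails} when $k$ is an X-player, as witnessed by $X=(0.9,0.3,0.4)$, $Y=(0.1,0.5,0.6)$, where $X \relD Y$ on $\{1,2,3\}$ but $Y \relD X$ on $\{2,3\}$. The Knockout mechanism is designed so that recursion is triggered only by ejecting the winning bidder, and in the uniqueness argument below that bidder is always a Y-player, so the lemma applies exactly when needed.

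For SPE existence, observe first that $X \relD Y$ forces the set of X-players to be nonempty. Let $i^*$ be an X-player and consider the profile in which every X-player proposes $X$ with bid $\pi(X,Y)$, and every Y-player proposes $Y$ with bid $\pi(Y,X)$. Since $\pi(X,Y) \succ_L \pi(Y,X)$, some X-player -- say $i^*$ after tie-breaking -- wins Round~1 and becomes the new player~1. In each subsequent round, a Y-player $j$ has $p^1_j = X_j < 1$ (Case A) and accepts $X$ since $X_j \geq p^1_j$; any other X-player $j$ has $p^1_j = 1$ (Case B) and accepts $X$ because $X_j > Y_j$ strictly exceeds any payoff she could obtain from the recursion, regardless of whether it returns $X$ or $Y$. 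Hence the outcome is $X$.

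For uniqueness, suppose for contradiction that some SPE yields $Z \neq X$. If $Z = Y$ with winner $k^*$ proposing $Y$ and all accepting, there are two cases. If $k^*$ is a Y-player, the restriction lemma gives $X \relD Y$ on $\mathcal{N}\setminus\{k^*\}$, so by the inductive hypothesis any triggered recursion returns $X$; any X-player $j$ with $p^{k^*}_j = 1$ would then reject for the strictly better payoff $X_j > Y_j$, forcing $k^*$ to have $p^{k^*}_j \leq Y_j$ for every X-player $j$, which caps her bid at $\pi(Y,X) \prec_L \pi(X,Y)$ and so loses against any X-player bidder. If $k^*$ is instead an X-player, proposing $Y$ yields only $Y_{k^*}$, while the deviation ``propose $X$ with bid $\pi(X,Y)$'' (analyzed in the existence paragraph) yields $X_{k^*} > Y_{k^*}$, violating best response. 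Outcomes arising via some dictatorial $D(j,\bfp^1)$ or via a recursion-triggered subgame give the winner strictly less than $X_{k^*}$, against which the same deviation is profitable. Thus every SPE has outcome $X$, completing the induction.
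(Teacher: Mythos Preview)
Your proof follows essentially the same inductive approach as the paper: induction on $|\mathcal{N}|$, with the core observation that an $X$-player bidding $\pi(X,Y)$ forces acceptance of $X$, while any $Y$-player who tries to outbid $\pi(Y,X)$ is rejected by some $X$-player and, via the inductive hypothesis applied to $K(X,Y,\mathcal{N}\setminus\{\text{winner}\})$, ends up no better than $X$. Two minor differences are worth noting. First, your base case is $n=1$ rather than the paper's $n=2$ (the paper proves Lemma~\ref{lem:indbase} separately); this is fine since the mechanism degenerates to a single choice when $|\mathcal{N}|=1$. Second---and this is a genuine improvement over the paper---you explicitly isolate and sketch the restriction lemma (removing a player $k$ with $X_k\le Y_k$ preserves strict $\mathcal{D}$-dominance), whereas the paper simply asserts it in a footnote. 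Your counterexample showing the lemma fails when an $X$-player is removed is a nice touch and makes clear why the recursion is only ever triggered by ejecting the \emph{winner}, who in the relevant branch is a $Y$-player.

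One small caveat: in your existence paragraph, when another $X$-player $j$ faces $p^1_j=1$ and contemplates rejecting, the recursion ejects the winning $X$-player $i^*$, so your restriction lemma does \emph{not} apply and you cannot invoke the inductive hypothesis to say the recursion returns $X$. You sidestep this by asserting the recursion ``returns $X$ or $Y$'', but that needs a one-line justification (e.g., any SPE outcome of a Knockout game gives every non-dictator at most $\max(X_j,Y_j)$, since a winner whose proposal is rejected into a dictatorship receives $0$). The paper glosses over the same point with the remark ``any player $k$ in $A$ cannot do better than accepting'', so you are no less rigorous than the original; just be aware the step is not completely innocent.
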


%\begin{theorem}\label{thm:knockout_LM}
%For any $\bfu \neq \bfu^*$, the mechanism $K(\bfu^*, \bfu, \mathcal{N})$ implements $\bfu^*$.
%\end{theorem}

We first prove the above theorem for the case of $n = 2$, which will serve as the base case for an inductive proof of the theorem in all generality. Given two players $\{1,2\}$, and two outcomes $X, Y$ \footnote{Even though we have only 2 players, we let $X,Y \in [0,1]^n$ to denote outcomes.}, let $X^{\{1,2\}} = (u_1,u_2)$ and $Y^{\{1,2\}} = (v_1,v_2)$.
\begin{lemma}\label{lem:indbase}
If $X^{\{1,2\}} \relD Y^{\{1,2\}}$, then $K(X,Y,\{1,2\})$ implements $X$.
\end{lemma}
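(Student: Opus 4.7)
My plan is to prove the $n=2$ base case by backward induction through the two-round game, with a case split driven by the form that $X \relD Y$ takes when there are only two players. Write $X=(u_1,u_2)$ and $Y=(v_1,v_2)$. A direct inspection of $\pi(X,Y)$ and $\pi(Y,X)$ shows that $X \relD Y$ is equivalent to either (a) $X$ weakly Pareto-dominates $Y$ (with at least one strict coordinate), or (b) $u_1>v_1$, $u_2<v_2$, and the crucial inequality $u_2>v_1$ (together with the symmetric case obtained by swapping the players). Case (a) is essentially immediate because both players weakly prefer $X$; the bulk of the argument concerns case (b), on which I focus below.

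First I would analyze Round 2. Given the winner's declaration $\bfll^1\in\{X,Y\}$ and bid $\bfp^1$, Assumption \ref{ass:favorite2} and Definition \ref{def:dict} imply that rejection via $D(2,\bfp^1)$ delivers player 2 utility $p^1_2$, while rejection via the single-player reduction $K(X,Y,\{2\})$ delivers her the larger of $u_2,v_2$. Thus, with the standard acceptance-on-indifference convention, player 2 accepts $\bfll^1=X$ iff $p^1_2\leq u_2$ and $p^1_2<1$, and analogously for $\bfll^1=Y$. Stepping back to Round 1, my candidate SPE in case (b) is: original player 1 declares $X$ and bids $\bfp^1=(1,u_2)$, while original player 2 declares $Y$ and submits any leximin-inferior bid. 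The decreasing-sorted form of $(1,u_2)$ is $(1,u_2)$ itself; any bid $\bfp^2$ of player 2 that would induce player 1 to accept $Y$ in Round 2 must have $p^2_1\leq v_1$, so $\min(\bfp^2)\leq v_1<u_2$, whence $\bfp^2\prec_L(1,u_2)$. Consequently, player 1 wins, proposes $X$, and player 2 accepts with payoff $u_2$; the final outcome is $X$ with payoffs $(u_1,u_2)$.

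To finish I would verify the absence of profitable deviations and then establish uniqueness of the outcome. Player 2's only alternatives are to lose (and receive $u_2$ from the accepted $X$), to win with a bid compatible with player 1 accepting $Y$ (ruled out by the inequality above), or to win with a bid that forces player 1 to reject (which yields $X$ again via the $K$ continuation, or utility $0$ via $D$); none of these strictly improve on $u_2$. Player 1 likewise has no incentive to declare $Y$ (payoff $v_1<u_1$) or to deliberately lose. For uniqueness of the outcome, I would argue by contradiction: in any hypothetical SPE whose outcome differs from $X$, the player whose payoff is thereby depressed below $u_i$ can deviate to the $(1,u_2)$-declare-$X$ strategy; the same inequality $u_2>v_1$ guarantees that this deviation wins and secures a strictly higher payoff. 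The main obstacle I expect is the familiar continuous-bid-space issue: the bid $(1,u_2)$ induces exact indifference for player 2 in Round 2, so a fully rigorous argument must either invoke the acceptance-on-indifference convention (standard in this literature; cf.\ \cite{moulin1984implementing}) or use a limiting argument with bids $(1,u_2-\epsilon)$ as $\epsilon\downarrow 0$.
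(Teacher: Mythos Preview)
Your proposal is correct and follows essentially the same approach as the paper: both arguments unpack $X \relD Y$ for two players into the key inequality $u_2 > v_1$ (the paper writes it as $A_2 > B_1$), have the $X$-preferring player bid $\pi(X,Y)=(1,u_2)$, and observe that any winning counter-bid by the $Y$-preferring player must promise the first player more than $v_1$, triggering a rejection. The organizational difference is minor---the paper structures the analysis around arbitrary Round~1(A) choices $A,B$ and shows the $\mathcal{D}$-dominant one prevails, while you split on the shape of $X\relD Y$ and argue uniqueness via a deviation---and you are in fact more explicit than the paper about the indifference issue at the bid $(1,u_2)$, which the paper glosses over.
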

\begin{proof}
Assume that ties are broken in favor of player 2.
In Round 1, let's say player 1 chooses $\bfll_1 = A$ and player 2 chooses $\bfll_2 = B$. The most pertinent case is when $1$ strictly prefers $A$, and $2$ strictly prefers $B$. All other cases follow analogously.

\emph{Case (1) $A \relD B$:} Now player 1 can bid $\bfp^1 = \pi(A^{\{1,2\}},B^{\{1,2\}}) = (1,A_2)$. To outbid player 1, player 2 has to bid $\bfp^2 = (t,1)$ where $t \geq A_2$. Since $A^{\{1,2\}} \relD B^{\{1,2\}}$, we know that $\pi(A^{\{1,2\}},B^{\{1,2\}}) \succ_L \pi(B^{\{1,2\}},A^{\{1,2\}}) \iff (1,A_2) \succ_L (B_1,1) \iff A_2 > B_1$. Therefore, player 2's bid must be such that $t > B_1$, in which case, player 1 is better off by rejecting, and player 2 gets 0 utility in this case. Consequently, $A$ is the outcome in this case.

\emph{Case (2) $B \relDsim A$:} Now player 1 has to specify some bid $\bfp^1 = (1,s)$ such that $s \leq A_2$, for otherwise, player 2 will reject and player 1 will get a utility of 0. Recall that ties are broken in favor of player 2. So player 2 can now outbid player 1, by specifying a bid $\bfp^2 = (t,1)$ such that $B_1 \geq t \geq A_2 \geq s$. This is because we know that in this case, $B \relD A \implies B_1 \geq A_2$. And in this case, $B$ is the outcome, since player 1 will accept.

Putting these together, we see that in any subgame perfect equilibrium of the game from Round 1 on, if $A \relD B$, then $A$ is the outcome (and if not, $B$ is the outcome).

As a result, since $X \relD Y$, we know that at least one of the players strictly prefers $X$ over $Y$. Consequently, $X$ is the only possible subgame perfect outcome. 

To show that a subgame perfect equilibrium exists, we show that a finite backwards reduction is possible. Let's say we are given the choice of $\bfll^1$ and $\bfll^2$ in Round 1(A). While infinite choices are available for the bids in the round 1(B), each player would not make a bid that gets rejected in favor of the dictatorial mechanism\footnote{This might not hold with simultaneous bidding in the case of more than two players. The equilibrium bids of the two players could be such that if one lowers his bid the other wins, and a third player rejects regardless of who wins.}. In effect, player 1, for example, is in effect making a choice between his own choice $\bfll_1$, player 2's choice $\bfll_2$, and the outcome of the subgame $L(X,Y,\{2\})$ (which by definition is equal to player 2's preferred outcome between $X$ and $Y$). As the number of choices of outcome in each step is effectively finite, the game can be reduced backwards, thereby guaranteeing the existence of a subgame perfect equilibrium. \qed
\end{proof}

In the case when $X \relDsim Y$ and $Y \relDsim X$, the outcome depends on how ties are broken between agents in the bidding phase of the game, and which outcome the winning bidder prefers.

We now extend the above claim via an inductive argument to prove Theorem \ref{thm:indextend}.

\renewcommand*{\proofname}{Proof of Theorem \ref{thm:indextend}}
\begin{proof}
This proof is essentially similar to that of Lemma \ref{lem:indbase}, with some additional complexities. In what follows, we highlight these differences, and discuss how we can take care of them. As mentioned before, the proof is an inductive one: an induction on the size of $\mathcal{N}$. Let the above statement be true for all sets of agents smaller than $\mathcal{N}$. The base case when $|\mathcal{N}| =2$ holds by Lemma \ref{lem:indbase}.

Now consider $K(X,Y,\mathcal{N})$. Let $A$ be the set of agents that strictly prefer $X$, and $B$ be the set of agents that strictly prefer $Y$, and $C$ be the set of agents that are indifferent. $X \relD Y$ necessitates that $A$ is non-empty. If $B$ is empty, then the argument below holds vacuously. So assume that $B$ is non-empty.

Consider an agent $i$ in $A$. This agent can choose $\bfll^i = X$ and bid $\bfp^i = \pi(X,Y)$. If this bid wins, then any player $k$ in $A$ cannot do better than accepting. Recall that, by way of the labeling in Round 1(C), we go through the players in increasing order of their components in the winner's bid vector. Therefore, only players in $A$ follow $k$.  Also, since $p^i_j = X_j$ for all $j \in B$, any agent in $B$ cannot do better than accepting either. This follows from the fact that $X_j = p^i_j < 1$ for all $j \in B$, and by rejecting, and thereby reducing the game to $D(\bfp^i,j)$ (recall Round $j$(A), for any $j \geq2$, in the definition of the Knockout mechanism), any such agent $j$ can get at most $X_j$.

Any agent $j$ in $B$ does not benefit by bidding any $\bfp$ such that $\bfp \succ_L \pi(Y,X)$, because in this case, there exists a player $k \in A$ such that $p_k > X_k$. Here we have to look at two cases when $j$ wins: 
\begin{enumerate}[(A)]
\item  If $p_k = 1$ then the player $k$ will reject $j$'s proposal and the game will be reduced to $K(X,Y,\mathcal{N}-\{j\})$. The outcome of this game by the induction assumption is $X$, since $X_j < Y_j$, and removing $j$ from the set of players still ensures $X^{\mathcal{N}-\{j\}} \relD Y^{\mathcal{N}-\{j\}}$ \footnote{$X^S$ denotes the projection of the $n$-dimensional $X$ onto the subset of coordinates corresponding to the set of players $S$.}.
\item If $p_k < 1$, then player $k$ will reject and the game reduces to $D(\bfp,k)$, the outcome of which has 0 utility for player $j$.  
\end{enumerate}
Thus, in either case, the outcome for player $j \in B$ is no better than $X$. Hence, $j$ would rather accept $i$'s proposal of $X$.

Combining the arguments above, we have that any subgame perfect equilibrium will lead to the outcome $X$. 

The existence of a subgame perfect equilibrium follows from the fact that a backwards reduction is possible, using an argument similar to that for Lemma \ref{lem:indbase}. \qed
\end{proof}
\renewcommand*{\proofname}{Proof}
A corollary of the above result, by way of Lemma \ref{lem:maximin-disagree}, is that when one of the two allowed outcomes is $\bfu^*$, the Knockout mechanism implements $\bfu^*$.

\begin{corollary}\label{thm:knockout_gen}
For any outcome $\bfu \neq \bfu^*$, the mechanism $K(\bfu^*,\bfu,\mathcal{N})$ implements $\bfu^*$.
\end{corollary}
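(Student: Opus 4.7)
The corollary is essentially a direct composition of the two main results that precede it, so my plan is to invoke them in sequence rather than to do any new combinatorial work. First, I would fix an arbitrary feasible outcome $\bfu \neq \bfu^*$ in the utility space $D$ (in particular, $\bfu$ is individually rational, so the Knockout mechanism is well-defined on the pair $(\bfu^*, \bfu)$).

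Next, I would apply Lemma \ref{lem:maximin-disagree} to conclude that $\bfu^* \relD \bfu$; this is where all the real work lives, having been carried out already in the averaging argument that showed any failure of $\mathcal{D}$-dominance by $\bfu^*$ would let $\tfrac{1}{2}(\bfu + \bfu^*)$ lexicographically beat $\bfu^*$, contradicting Lemma \ref{lem:unique}.

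Finally, I would plug this $\mathcal{D}$-dominance into Theorem \ref{thm:indextend} with $X = \bfu^*$ and $Y = \bfu$. Since the hypothesis $X \relD Y$ is satisfied, the theorem immediately yields that $K(\bfu^*, \bfu, \mathcal{N})$ admits a subgame perfect equilibrium and that every such equilibrium produces the outcome $\bfu^*$, which is exactly the definition of implementation.

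Because both ingredients are already established, I do not expect any genuine obstacle here; the only thing to be careful about is to note explicitly that Theorem \ref{thm:indextend} gives both existence of a subgame perfect equilibrium and uniqueness of its outcome, so that the word ``implements'' in the corollary statement is discharged in full. The proof will therefore be a single short paragraph consisting of the two citations chained together.
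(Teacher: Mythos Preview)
Your proposal is correct and matches the paper's own reasoning exactly: the corollary is obtained by combining Lemma~\ref{lem:maximin-disagree} with Theorem~\ref{thm:indextend}, just as you describe. There is nothing to add.
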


\section{The outer binary tree mechanism: a recursive system of Knockout games}\label{sec:fullmechanism}
\sloppy
In this section, we develop a complete mechanism -- \emph{a binary game tree system with recursive Knockout games} -- that implements the lexicographic maxmin solution. On a sidenote, a modified version of our binary game tree can be used to simplify Howard's mechanism  \cite{howard1992social} that implements the Nash Bargaining solution.

Before describing our outer binary tree mechanism, we extend the definition of the Knockout mechanism to a game over two subgames, instead of two outcomes. We had defined $K(X,Y,\mathcal{N})$ as the Knockout mechanism over two outcomes $X$ and $Y$. Similarly given any two bargaining games $A$ and $B$ that each produce some outcome, we define the Knockout mechanism $K(A,B,\mathcal{N})$ analogously: by replacing the outcomes $X$ and $Y$ in the game tree of $K(X,Y,\mathcal{N})$ by the pending outcomes of subgames $A$ and $B$. In other words, instead of bargaining over the outcomes, the players bargain over which of the  subgames $A$ or $B$ they prefer to play next. We now make the following simple observation:

\begin{lemma}\label{lem:KMsubgames}
If $\alpha$ and $\beta$ are the unique subgame perfect outcomes of games $A$ and $B$, and $\alpha \relD \beta$, then $\alpha$ is the unique subgame perfect outcome of $K(A,B,\mathcal{N})$.
\end{lemma}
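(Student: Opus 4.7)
The plan is a backward-induction reduction from $K(A,B,\mathcal{N})$ to the outcome-level instance $K(\alpha,\beta,\mathcal{N})$, followed by a direct appeal to Theorem \ref{thm:indextend}. Since the subgames $A$ and $B$ each admit a unique subgame perfect outcome by hypothesis, backward induction in any subgame perfect equilibrium of $K(A,B,\mathcal{N})$ forces the continuation payoff from reaching $A$ (resp.\ $B$) to equal $\alpha$ (resp.\ $\beta$). Substituting the two terminal subtrees with these payoff vectors therefore yields an extensive-form game strategically equivalent to $K(\alpha,\beta,\mathcal{N})$: Round 1(A) becomes the choice $\bfll^i\in\{\alpha,\beta\}$; the bidding in Round 1(B)--(C) and the subsequent accept/reject decisions act on the utility vectors $\alpha,\beta$; and the inner threats, namely the dictatorial subroutines $D(j,\bfp^1)$ and the recursive Knockout on a smaller player set, carry over unchanged.

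Once this reduction is in place, Theorem \ref{thm:indextend} applied with $X=\alpha$, $Y=\beta$---the hypothesis $\alpha \relD \beta$ being exactly what that theorem requires---identifies $\alpha$ as the unique subgame perfect outcome of $K(\alpha,\beta,\mathcal{N})$, and hence of $K(A,B,\mathcal{N})$. Existence of an SPE also comes for free, since the backward reduction within each of $A,B$ terminates at the given unique SPE payoffs, while the outer Knockout admits a finite backward reduction by the argument already used in Lemma \ref{lem:indbase}.

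The one delicate piece is the recursive threat $K(A,B,\mathcal{N}-\{1\})$ that gets invoked inside $K(A,B,\mathcal{N})$ when a player with $p^1_j=1$ rejects in Round $j$(B): one has to collapse this nested instance as well. I would handle it by interleaving with the outer induction on $|\mathcal{N}|$ that already drives the proof of Theorem \ref{thm:indextend}, observing that strict $\mathcal{D}$-dominance $\alpha \relD \beta$ is preserved under projection to $\mathcal{N}-\{1\}$ precisely when player $1$ strictly prefers the losing side---which is exactly the branch on which that recursive threat lies along the equilibrium path in the analysis of Theorem \ref{thm:indextend}. I expect this nested-induction bookkeeping to be the only non-routine step; the rest is a mechanical subgame-for-outcome substitution.
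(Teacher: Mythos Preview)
Your proposal is correct and follows the same route as the paper: collapse the terminal subgames $A$ and $B$ to their unique SPE payoffs $\alpha,\beta$ by backward induction, obtain the extensive form of $K(\alpha,\beta,\mathcal{N})$, and invoke Theorem~\ref{thm:indextend}. The ``delicate piece'' you flag, however, is not actually delicate: the paper works with the \emph{fully expanded} extensive-form tree of $K(\alpha,\beta,\mathcal{N})$, in which every recursive call $K(\alpha,\beta,\mathcal{N}\setminus\{1\})$ has already been unrolled, so the terminal-node sets $\mathcal{A}$ and $\mathcal{B}$ (those yielding $\alpha$ and $\beta$) include the leaves inside all nested Knockouts. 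Appending the game trees of $A$ and $B$ at \emph{all} such leaves and then collapsing them back to $\alpha,\beta$ is a single uniform step---no interleaved induction on $|\mathcal{N}|$ is needed, and in particular you need not verify that $\alpha^{\mathcal{N}\setminus\{1\}}\relD\beta^{\mathcal{N}\setminus\{1\}}$ for every possible ejected winner (which, as you implicitly note, can fail off the equilibrium path). Drop the nested-induction bookkeeping and your argument becomes exactly the paper's.
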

The above lemma is a simple consequence of the backward induction property (see, e.g. Proposition 99.2 in Osborne and Rubinstein \cite{osborne1994course}) of a subgame perfect equilibrium. We include a proof in Appendix \ref{app:subgames} for the sake of completeness. 

\subsection{Defining the outer binary tree mechanism}
The outer binary tree mechanism can be defined (see Figure \ref{fig:binary_tree}) as follows:	
\begin{itemize}
\item First, each player $i$ in turn chooses an outcome of her choice $P_i$.
\item For the sake of simplicity, assume that $n$ is a power of 2. The game tree\footnote{The game tree can be modified very easily if $n$ is not a power of 2. Details are deferred to a full version.}$^{,}$\footnote{Also, we are describing not the extensive form of the game tree, but just a scaffold -- with Knockout games placed on each of the nodes of the tree. On the whole, we have a system of recursive Knockout games.} is obtained as follows: create a source node $S_{1,2,\ldots, n}$, then its child nodes $S_{1,2,\ldots, n/2}$ and $S_{n/2 +1,\ldots, n}$, and so on, till we reach the leaf nodes $S_1, S_2, \ldots, S_n$. 
\item Each leaf node $S_i$ is identified with the outcome $P_i$.
\item Given any node $X$ in the above tree, and its child nodes $A$ and $B$, the game at $X$ is defined recursively by the Knockout mechanism $K(A,B,\mathcal{N})$. 
\item The game-play starts at $S_{1,2,\ldots,n}$, the game $K(S_{1,2,\ldots,\frac{n}{2}},S_{\frac{n}{2} + 1,\ldots, n},\mathcal{N})$ is played, and if\footnote{The game could also go into a player's dictatorial mechanism (recall Definition \ref{def:dict}) and end there.} and when one of $S_{1,2,\ldots,\frac{n}{2}}$ or $S_{\frac{n}{2} + 1,\ldots, n}$ is the chosen outcome, the game moves to the corresponding node, and so on.
\end{itemize}

Note that all players are involved at the start of the game-play at every node. The subscripts of the nodes indicate which of the $S_i$'s are reachable from each of them, and not the players involved. However, some agents might be temporarily not involved, till the game reaches the next node, or a player's dictatorial mechanism (recall Definition \ref{def:dict}).

Before analyzing the above mechanism, we make an important note on the intuition behind its working. 
\emph{Our mechanism, in its entirety, is a recursive system that works like a knockout tournament in reverse. By a repeated application of Lemma \ref{lem:KMsubgames}, doing a backward induction on the game tree above is equivalent to running a standard knockout tournament (\`a la a tennis grand slam) on the set $\{P_i:i \in \mathcal{N}\}$, where each pairwise game is decided according to the $\mathcal{D}$-dominance relation between the outcomes involved.}

\begin{figure}
    \centering
    \includegraphics[scale=0.7]{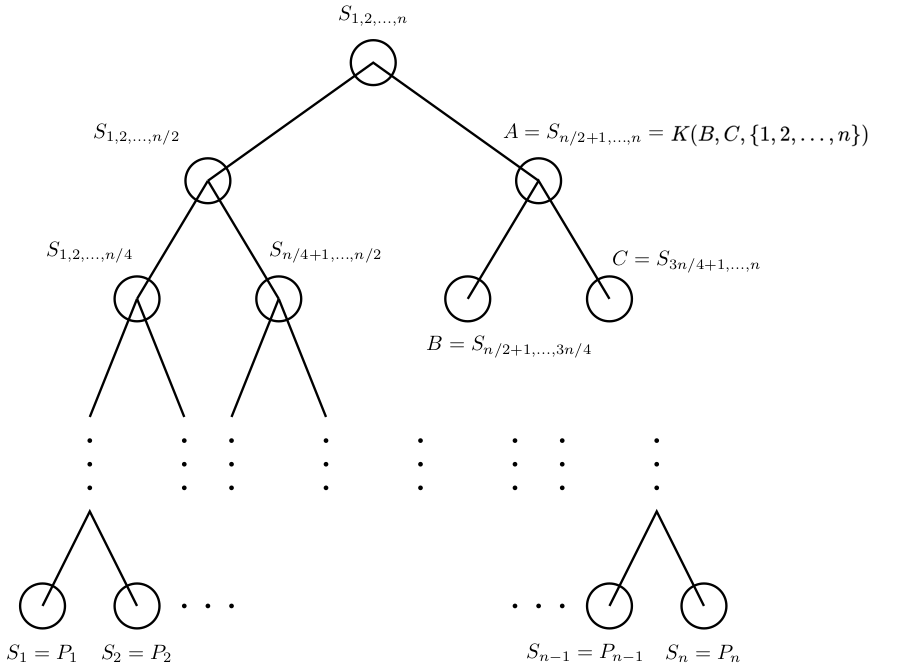}
    \caption{The outer binary tree mechanism}
    \label{fig:binary_tree}
\end{figure}

\subsection{Analyzing the binary tree mechanism}
Our main result can be stated as follows:
\begin{theorem}
The binary tree mechanism above implements the lexicographic maxmin solution.
\end{theorem}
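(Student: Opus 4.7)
The plan is to prove the theorem in two halves, both resting on a \emph{propagation lemma}: in any subgame perfect equilibrium (SPE) of the full mechanism, if at least one player proposes $P_j = \bfu^*$, then the root outcome is $\bfu^*$. The propagation lemma itself is a straightforward induction up the path from $S_j$ to the root, applying Lemma \ref{lem:KMsubgames} at each Knockout node whose two children's sub-SPE outcomes are $\bfu^*$ and some $\beta$: when $\beta \neq \bfu^*$, Lemma \ref{lem:maximin-disagree} gives $\bfu^* \relD \beta$, so the Knockout yields $\bfu^*$; when $\beta = \bfu^*$ the conclusion is trivial.

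For existence, I would exhibit the strategy profile in which every player proposes $P_i = \bfu^*$ and plays an SPE of the local Knockout game at every internal node of the tree (such SPEs exist by the backwards-reduction argument used in the proof of Theorem \ref{thm:indextend}, extended to Knockouts over subgames). The overall outcome is $\bfu^*$ by the propagation lemma. No player can gain by deviating on her proposal, since $n-1 \geq 1$ other proposals still equal $\bfu^*$ and propagation forces the outcome back to $\bfu^*$; deviations inside a subgame are ruled out because the strategies at that subgame already form an SPE of it.

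For uniqueness, suppose for contradiction that some SPE $\sigma$ of the full mechanism has outcome $\alpha \neq \bfu^*$. Lemma \ref{lem:unique} gives $\bfu^* \succ_L \alpha$, and since coordinate-wise dominance implies sorted non-increasing dominance, the condition $\alpha_i \geq u^*_i$ for every $i$ would (together with $\alpha \neq \bfu^*$) yield $\alpha \succ_L \bfu^*$, a contradiction. Hence some player $i$ satisfies $u^*_i > \alpha_i$. Let $i$ deviate only by changing her proposal to $P_i = \bfu^*$, leaving every other component of $\sigma$ untouched; under this modified history, $\sigma$'s continuation strategies at every descendant subgame still constitute an SPE of that subgame (by the SPE property of $\sigma$ applied at each reachable history), so the propagation lemma applies and the new outcome is $\bfu^*$. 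This gives $i$ utility $u^*_i > \alpha_i$, a profitable deviation, contradicting the assumption that $\sigma$ is an SPE.

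The delicate point will be ensuring the propagation lemma really applies to an \emph{arbitrary} SPE $\sigma$ and not only to the strategy profile constructed in the existence step: along the path from $S_i$ to the root we need the continuation at each Knockout node to be an SPE of that Knockout when the modified child outcomes $(\bfu^*, \beta)$ are substituted in. This is guaranteed by the definition of SPE (continuations are SPEs at every reachable subgame), so Lemma \ref{lem:KMsubgames} and Corollary \ref{thm:knockout_gen} can be invoked unchanged at every node on the path, forcing $\bfu^*$ to propagate to the root under the deviation.
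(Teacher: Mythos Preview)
Your proposal is correct and follows essentially the same route as the paper's proof: both parts (existence and uniqueness) hinge on exactly the propagation fact you isolate---that once some leaf carries $\bfu^*$, repeated invocation of Lemma~\ref{lem:maximin-disagree} plus Lemma~\ref{lem:KMsubgames} (equivalently Corollary~\ref{thm:knockout_gen}) pushes $\bfu^*$ all the way to the root in any SPE continuation. The paper uses this implicitly in both Part~(A) and Part~(B), whereas you name it as a separate lemma; your uniqueness argument (find a player $i$ with $u^*_i>\alpha_i$, have her switch her proposal to $\bfu^*$, and invoke propagation) is exactly the paper's Part~(B), and your observation that the continuation after the deviation is still an SPE of every reached subgame is the point the paper leaves unspoken but relies on.
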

\begin{proof}
We prove the above theorem in two parts: 
\begin{enumerate}[(A)]
\item the solution $\bfu^*$ is a subgame perfect outcome, and
\item $\bfu^*$ is the only possible subgame perfect outcome. 
\end{enumerate} 

\emph{Part (A):} Assume all players chooses $\bfu^*$, except player 1 who chooses some other outcome $\bfu$. As a consequence of Corollary \ref{thm:knockout_gen}, the outcome $\bfu^*$ is implemented in the knockout game $K(S_1,S_2,\mathcal{N})$ that is played at the parent node of $P_1$, where $S_1 = P_1 = \bfu$ and $S_2 = P_2 = \bfu^*$. Since the outcome at every other leaf node, besides $S_1$ and $S_2$, is $\bfu^*$, a repeated application of Lemma \ref{lem:KMsubgames} tells us that the outcome at $S_{1,2,\ldots,n}$ is $\bfu^*$. 

\emph{Part (B):} Assume we have a subgame perfect equilibrium where not all players choose $\bfu^*$ in the first round. In any such equilibrium, any player $i$ is guaranteed a utility of at least $u^*_i$ -- for if not, player $i$ can choose $\bfu^*$, and by a repeated application of Corollary \ref{thm:knockout_gen} and Lemma \ref{lem:KMsubgames}, we have that $\bfu^*$ is the outcome implemented in every subgame node that is reachable from $P_i$: ultimately at $S_{1,2,\ldots,n}$. Since every player $i$ is guaranteed at least $u^*_i$ in a subgame perfect equilibrium, the only possible outcome is $\bfu^*$ (by Lemma \ref{lem:unique}).  \qed
\end{proof}

\section{Discussion}
In this paper, we look at a widely studied and well characterized bargaining solution, namely the lexicographic maxmin solution. Although a lot of work has been done on the implementation of other major bargaining solutions, there is no known mechanism for the implementation of the lexicographic maxmin solution. Interestingly, we are able to design such a mechanism by uncovering and utilizing novel combinatorial properties of the lexicographic maxmin solution. In doing so, we also make use of the standard assumption in bargaining implementation theory that the space of outcomes is such that it is possible for each player to grab all the surplus. As we pointed out earlier, an interesting open problem is to implement any of the various bargaining solutions, including the lexicographic maxmin solution, without the above assumption, or to prove that such implementations are impossible -- our intuition aligns more with the latter (details deferred to a full version). It would also be interesting to look at the case where there is uncertainty in the knowledge that each player has about the utility valuations of other players. We hope this paper fuels more research in the abovementioned directions.

\newpage

\newpage

\appendix

\section{The Kalai-Smorodinsky Solution}\label{app:KS}
The Kalai-Smorodinsky (KS for short) solution is defined as follows:
\begin{definition}[Kalai-Smorodinsky solution]
Given a utility profile $\bfu$, the KS solution is given by the lottery $l \in P(A)$ that satisfies:
\begin{align*}
\forall ~j \in N, ~ \frac{U_j(l) - U_j(s^*)}{ \underset{x \in \bfI(\mathbf{u})}{\max} U_j(x) - U_j(s^*)} = \max_{l^\prime \in I(\mathbf{u})} \min_{i \in N} \frac{U_i(l^\prime) - U_i(s^*)}{ \underset{x \in \bfI(\mathbf{u})}{\max} U_i(x) - U_i(s^*)}.
\end{align*}
\end{definition}

In other words, the KS solution is the maximal point that results in the equalization of the relative utility gains among the agents. The KS solution is obviously individually rational, and it is easy to see that it is invariant under affine transformations.

The KS solution is not guaranteed to be Pareto optimal. We illustrate this point with an example.

\begin{example}\label{eg:roth2}
Consider a three person bargaining problem whose disagreement point $s^*$ is such that the utilities of the agents for it is $(0,0,0)$. The set of alternatives is given by $s^*$ and five other points $a,b,c,d,e$ for which the utilities are $(1,0,0)$, $(0,1,0)$, $(0,0,1)$, $(1,1,0)$, $(1,0,1)$ respectively. \footnote{This example violates Assumption \ref{ass:favorite2}. We use it nevertheless for ease of exposition.} Clearly the set of Pareto optimal lotteries are those that mix between $d$ and $e$, and any such must necessarily give utility $1$ to the third agent. It is easy to see that the KS solution here is one that gives utilities $(1/2,1/2,1/2)$ to the agents, which is Pareto dominated. 
\end{example}

In fact, it is known that for bargaining problems with $3$ or more agents, no solution exists that satisfies the axioms of Pareto efficiency and symmetry together with monotonicity \cite{roth1979impossibility}.

\section{Subgame perfect reduction: Proof of Lemma \ref{lem:KMsubgames}}\label{app:subgames}
\begin{proof}
Given that $\alpha \relD \beta$, we know that $\alpha$ is the unique subgame perfect outcome of $K(\alpha,\beta, \mathcal{N})$. 

Consider the extensive form game tree of the perfect information game $K(\alpha,\beta, \mathcal{N})$. Some of the terminal nodes, say the set $\mathcal{A}$, give the utility outcome $\alpha$, while some, say the set $\mathcal{B}$, others give the utility outcome $\beta$.
Given any two bargaining games $A$ and $B$ that each produce some outcome, the game tree of the Knockout mechanism $K(A,B,\mathcal{N})$ by appending the game trees of $A$ and $B$ to the nodes in $\mathcal{A}$ and $\mathcal{B}$, respectively.

Now consider a node $a \in \mathcal{A}$ in the game tree of $K(A,B,\mathcal{N})$. Since $\alpha$ is the unique subgame perfect outcome of game $A$, we know that a backward induction on the subtree below $a$, upto $a$, yields the outcome $\alpha$. Similarly, for any node $b \in \mathcal{B}$, the same reduction yields the outcome $\beta$.

Since the reduced tree obtained from the above step is identical to that of $K(\alpha,\beta, \mathcal{N})$, and $\alpha$ is the unique subgame perfect equilibrium of $K(\alpha,\beta, \mathcal{N})$, a backward induction on this reduced tree yields $\alpha$.

As a result, it must be that a backward induction on the full game tree of $K(A,B,\mathcal{N})$ also yields $\alpha$, whence the claim is proved. \qed
\end{proof}
%\section{Difference from cake-cutting}\label{app:cake}
%We want to exhibit a space of utility outcomes that is possible under our model but not as an instance of the cake-cutting problem.
%
%For the case of two players, let the space of outcomes be given by the convex hull of the following points:
%\begin{align*}
%(0,0), (1,0), (0,1) \mbox{ and } (0.75,0.75).
%\end{align*}
%Clearly, this is possible under our setting.
%
%Assume there is cake-cutting problem with the same outcome space. Obviously, $(0,1)$ and $(1,0)$ correspond to the outcome where the whole cake $[0,1]$ goes to one of the two players.
%Now consider the other two outcomes. We can divide the cake into four parts $a,b,c,d$ such that 
%\begin{itemize}
%\item in the outcome $(0,0)$ player 1 gets $a,b$ and player 2 gets $c,d$, and,
%\item in the outcome $(0.75,0.75)$ player 1 gets $a,c$ and player 2 gets $b,d$.
%\end{itemize}
%
%Let $u_i(t)$ denote the utility of player $i$ for part $j$ of the cake. Then we have:
%\begin{align*}
%u_1(a) + u_1(b) = 0, \\
%u_1(a) + u_1(c) = 0.75, \\
%u_2(c) + u_2(d) = 0, \\
%u_2(b) + u_2(d) = 0.75.
%\end{align*}
%
%$u_1(a) + u_1(b) = 0$ implies $u_1(a) = 0$ and $u_1(b) = 0$. Similarly, $u_2(c) = u_2(d) = 0$. 
%Therefore $u_1(c) = 0.75$ and $u_2(b) = 0.75$. Since the whole cake is worth 1 for each, we immediately get $u_1(d) = u_2(a) = 0.25$.
%
%Using the above observations see that if we give $c,d$ to player 1 and $a,b$ to player 2, then both get a utility of 1. But we know that $(1,1)$ is infeasible, and hence reach a contradiction.

\end{document}